\documentclass[12pt, a4paper]{article}

\usepackage{amsmath, amsthm, amssymb}
\usepackage{geometry}
\usepackage[colorlinks=true, linkcolor=blue, citecolor=blue, urlcolor=blue]{hyperref}
\usepackage{mathtools}
\usepackage{graphicx}
\usepackage{booktabs}
\usepackage{float}
\usepackage{caption}
\usepackage{setspace}
\usepackage{titlesec}
\usepackage{parskip}
\usepackage[numbers,square,sort&compress]{natbib}

\titleformat{\section}{\large\bfseries}{}{0em}{}[\titlerule]
\titleformat{\subsection}{\normalsize\bfseries}{}{0em}{}

\newtheorem{theorem}{Theorem}
\newtheorem{lemma}[theorem]{Lemma}
\newtheorem{corollary}[theorem]{Corollary}
\newtheorem{definition}{Definition}

\title{
  \vspace{1.5cm}
  {\LARGE\bfseries Nonlinear Factor Decomposition via\\[0.3em]
  Kolmogorov-Arnold Networks:}\\[0.5em]
  {\large\itshape A Spectral Approach to Asset Return Analysis}\\[1.5cm]
}
\author{David Breazu\\[0.3em]
{\small Faculty of Mathematics and Computer Science, University of Bucharest}\\
{\small Bucharest, Romania}}
\date{March 2026}

\begin{document}

\maketitle
\thispagestyle{empty}
\newpage

\begin{abstract}
\noindent
KAN-PCA is an autoencoder that uses a KAN as encoder and a linear map as decoder. It generalizes classical PCA by replacing linear projections with learned B-spline functions on each edge. The motivation is to capture more variance than classical PCA, since the linear assumption of PCA can misrepresent the dependence structure during market stress, when correlations between assets change sharply.

We prove that if the spline activations are forced to be affine, KAN-PCA yields exactly the same result as classical PCA, which establishes PCA as a strict special case of the model. Experiments on 20 S\&P~500 stocks show that KAN-PCA captures slightly more in-sample variance than classical PCA at equal factor count ($R^2 = 65.56\%$ vs $64.71\%$ with 3 factors on the training period), while out-of-sample the two methods are close, with classical PCA marginally ahead ($57.32\%$ vs $56.73\%$). All results use a training-only benchmark that removes the data leakage present in our initial experiments.
\end{abstract}

\newpage
\tableofcontents
\newpage

\section{Introduction}

A factor model is a useful tool for understanding a portfolio of assets. Instead of analyzing every stock individually, a factor model identifies a small number of common forces that drive all assets simultaneously. These forces are called factors \citep{fama1993}. For a hedge fund, this means that instead of tracking hundreds of individual positions, one can describe the entire portfolio through a handful of underlying risk sources, which reduces a high-dimensional problem to an interpretable one.

Classical PCA is the standard method for extracting these factors from return data. However, PCA cannot capture the nonlinearities present in financial markets. During market stress, asset correlations rise sharply, so a fixed linear factor structure estimated over a full sample can misrepresent the dependence structure in such periods. PCA treats a crisis as a linear amplification of normal market dynamics rather than as a qualitatively different regime.

We therefore propose KAN-PCA: an autoencoder that uses a KAN encoder \citep{liu2024} combined with a linear decoder. The KAN encoder replaces linear projections with learned spline functions, while the linear decoder keeps the factors directly interpretable. When the spline activations are forced to be affine, KAN-PCA recovers classical PCA exactly. In the presence of genuine nonlinearity, it captures additional structure that PCA cannot represent by construction.

Our main contributions are:
\begin{enumerate}
  \item A formal proof that forcing the spline activations to be affine yields exactly the same result as classical PCA, establishing PCA as a special case of the model.
  \item Experimental evidence that KAN-PCA captures slightly more in-sample variance than classical PCA with the same number of factors ($65.56\%$ vs $64.71\%$ on the training period).
  \item An ablation study that identifies and corrects a data leakage issue in the original training procedure. Under a training-only benchmark, classical PCA and KAN-PCA are close out-of-sample, with PCA marginally ahead ($57.32\%$ vs $56.73\%$).
  \item Visualization of the learned edge functions, providing direct evidence of nonlinear factor structure in S\&P~500 returns.
\end{enumerate}

\section{Related Work}

\textbf{Nonlinear generalizations of PCA.} Classical PCA is the optimal linear method for second-order dimensionality reduction, but its linearity is a known limitation when the data lies on a nonlinear manifold. Two classical generalizations address this. Kernel PCA \citep{scholkopf1998} maps the data into a high-dimensional feature space through a fixed kernel and performs linear PCA there, so the nonlinearity is fixed in advance by the choice of kernel rather than learned from data. Autoencoders provide the second route: a neural encoder and decoder trained on a reconstruction objective learn a nonlinear low-dimensional representation. The connection between the two is classical. A linear autoencoder with squared loss recovers the PCA subspace exactly \citep{baldi1989}, and deep autoencoders were introduced as a practical nonlinear PCA by Hinton and Salakhutdinov~\citep{hinton2006}. KAN-PCA belongs to the autoencoder family, but it differs from a generic nonlinear autoencoder in that its nonlinearity is carried by learnable univariate B-spline functions on each edge, which keeps each asset's contribution individually inspectable rather than entangled in a dense weight matrix.

\textbf{Autoencoders in asset pricing.} In finance, autoencoders have been used to estimate latent factor structures from returns and characteristics. Gu et al.~\citep{gu2021} introduced conditional autoencoder asset pricing models, in which factor loadings are modeled as nonlinear functions of firm characteristics, establishing the autoencoder as a standard tool for latent factor estimation in empirical asset pricing. Our work is narrower and more classical in spirit: we extract factors directly from the return covariance structure, without conditioning on characteristics, which keeps the comparison with classical PCA exact and controlled.

\textbf{KAN-based models.} Kolmogorov-Arnold Networks \citep{liu2024} have recently been combined with autoencoders for representation learning, for example the Kolmogorov-Arnold Auto-Encoder of Yu et al.~\citep{yu2024} and the KAN autoencoders of Moradi et al.~\citep{moradi2024}, both of which report improved reconstruction over MLP autoencoders on benchmark data. Most directly related to the present work, Wang and Singh~\citep{wang2024} propose a KAN-based autoencoder for conditional asset pricing, modeling factor exposures as nonlinear functions of asset characteristics. We differ from Wang and Singh~\citep{wang2024} in two respects. First, our object of study is the covariance structure of returns rather than characteristic-conditioned exposures, which places our model in a direct, like-for-like comparison with classical PCA. Second, and central to this paper, we establish a theoretical characterization absent from prior KAN-autoencoder work: we prove that constraining the spline activations to be affine makes KAN-PCA coincide exactly with classical PCA, so that PCA is recovered as a strict special case within a larger function class. To our knowledge this exact-recovery result has not previously been stated for KAN autoencoders.

\section{Mathematical Framework}

\subsection{Classical PCA}

Let $\{x_t\}_{t=1}^T \subset \mathbb{R}^N$ be a sequence of $N$-dimensional standardized return vectors with zero mean. The empirical covariance matrix is:
\[
\Sigma := \frac{1}{T} \sum_{t=1}^{T} x_t x_t^\top \;\in\; \mathbb{R}^{N \times N}, \qquad (\Sigma_{ij} = \text{empirical covariance between asset } i \text{ and asset } j)
\]
By the spectral theorem, since $\Sigma$ is symmetric positive semidefinite, it admits the decomposition:
\[
\Sigma = U \Lambda U^\top, \qquad \Lambda = \mathrm{diag}(\lambda_1, \ldots, \lambda_N), \quad \lambda_1 \geq \cdots \geq \lambda_N \geq 0
\]
where $U = [u_1 \mid \cdots \mid u_N]$ satisfies $U^\top U = I$ (the columns are orthonormal: each eigenvector has unit length and is perpendicular to all others). The $k$-factor PCA projection uses the top-$k$ eigenvectors $U_k := [u_1 \mid \cdots \mid u_k]$:
\[
z_t = U_k^\top x_t \;\in\; \mathbb{R}^k, \qquad \hat{x}_t = U_k z_t \;\in\; \mathbb{R}^N
\]
The reconstruction loss equals $\sum_{i=k+1}^{N} \lambda_i$, the minimum achievable among all rank-$k$ linear approximations. On the full sample with $N=20$ stocks and $k=3$, the leading eigenvalues account for $R^2 = 62.99\%$ of the total variance, with:
\[
\lambda_1 \text{ explains } 48.00\%, \quad \lambda_2 \text{ explains } 8.96\%, \quad \lambda_3 \text{ explains } 6.03\%
\]
(The experiments in Section~5 instead fit PCA on the training period only, for a fair comparison against the learned models, which gives a slightly different in-sample figure.)

\subsection{B-Spline Functions}

\begin{definition}[B-Spline Space]
Let $G = \{t_0 < t_1 < \cdots < t_m\} \subset \mathbb{R}$ be a knot vector and $k \geq 1$. The B-spline space of degree $k$ over $G$ is:
\[
\mathcal{S}_k(G) := \left\{ \varphi \in C^{k-1}(\mathbb{R}) \;\middle|\; \varphi\big|_{[t_i, t_{i+1}]} \in \mathcal{P}_k,\; \forall i \right\}
\]
where $\mathcal{P}_k$ denotes the space of polynomials of degree at most $k$ \citep{deboor2001}.
\end{definition}

We use $k$ to denote the polynomial degree, so a degree-$k$ spline has $C^{k-1}$ continuity at simple interior knots (equivalently, order $k+1$ in de~Boor's convention). B-splines are piecewise polynomials that are smooth at their knots. A cubic B-spline ($k=3$) is therefore $C^2$-continuous, which is sufficient smoothness for our purposes. The basis functions $\{B_{i,k}\}$ are computed via the Cox-de~Boor recursion and satisfy the partition of unity property $\sum_i B_{i,k}(x) = 1$ \citep{deboor2001}.

\subsection{Kolmogorov-Arnold Networks}

The Kolmogorov-Arnold representation theorem \citep{kolmogorov1957} states that any continuous function $f: [0,1]^n \to \mathbb{R}$ can be written as:
\[
f(x_1, \ldots, x_n) = \sum_{q=1}^{2n+1} \Phi_q\!\left( \sum_{p=1}^{n} \varphi_{q,p}(x_p) \right)
\]
where $\Phi_q, \varphi_{q,p}: \mathbb{R} \to \mathbb{R}$ are continuous univariate functions.

\begin{definition}[KAN Layer]
A KAN layer $\Phi : \mathbb{R}^{n} \to \mathbb{R}^{m}$ is defined componentwise by:
\[
\left(\Phi(x)\right)_j := \sum_{i=1}^{n} \varphi_{ji}(x_i), \qquad j = 1, \ldots, m
\]
where each $\varphi_{ji} \in \mathcal{S}_k(G)$ is a learnable univariate B-spline function \citep{liu2024}. The full activation combines a SiLU \citep{elfwing2018} basis term with the spline:
\[
\varphi_{ji}(x) = w_b \cdot \mathrm{SiLU}(x) + w_s \cdot \sum_{l} c_l B_{l,k}(x)
\]
with learnable parameters $w_b, w_s \in \mathbb{R}$ and spline coefficients $c_l \in \mathbb{R}$.
\end{definition}

The key distinction from MLPs is that the activations reside on edges. Each edge carries its own learned function rather than a scalar weight, and the nodes merely sum their inputs.

\subsection{KAN-PCA}

\begin{definition}[KAN-PCA]
KAN-PCA is a KAN autoencoder consisting of:
\begin{itemize}
  \item A KAN encoder $\mathcal{E} : \mathbb{R}^N \to \mathbb{R}^k$, a composition of KAN layers;
  \item A linear decoder $\mathcal{D} : \mathbb{R}^k \to \mathbb{R}^N$, defined by $\mathcal{D}(z) = Wz$ where $W \in \mathbb{R}^{N \times k}$ is the decoding matrix: row $i$ of $W$ specifies how asset $i$ is reconstructed from the $k$ factors;
\end{itemize}
trained to minimize the reconstruction loss:
\[
\mathcal{L}(\mathcal{E}, \mathcal{D}) := \frac{1}{T} \sum_{t=1}^{T} \left\| x_t - \mathcal{D}\!\left(\mathcal{E}(x_t)\right) \right\|^2
\]
The $k$-dimensional representations $z_t = \mathcal{E}(x_t) \in \mathbb{R}^k$ are the nonlinear factors: the compressed description of the market on day $t$.
\end{definition}

Unlike classical PCA, each asset's contribution to each factor is mediated by a learned spline function $\varphi_{ji}$ rather than a fixed scalar weight. This lets the model capture asymmetric and regime-dependent relationships between assets and factors.

\section{Theoretical Result}

\begin{lemma}[Affine Splines are Linear Maps]
\label{lem:linear}
A spline function $\varphi \in \mathcal{S}_k(G)$ whose second derivative vanishes identically is affine, $\varphi(x) = wx + b$ for some $w, b \in \mathbb{R}$. For zero-mean standardized inputs the intercept $b$ does not affect the recovered factor subspace, so it may be set to $b = 0$ without loss of generality, giving $\varphi(x) = wx$.
\end{lemma}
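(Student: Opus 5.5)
The lemma has two parts, and I will prove them separately. The first part is a calculus fact about splines. The second is a remark about how a constant offset propagates through a linear decoder.

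\emph{Part one (affineness).} Take $\varphi \in \mathcal{S}_k(G)$ and suppose $\varphi'' \equiv 0$. For $k \geq 2$ the definition gives $\varphi \in C^{k-1}(\mathbb{R}) \subset C^1(\mathbb{R})$, so $\varphi'$ exists and is continuous everywhere. On each open knot interval $(t_i,t_{i+1})$, and on the two unbounded end pieces, $\varphi$ is a polynomial with zero second derivative. Hence $\varphi'$ is constant on each piece. Continuity of $\varphi'$ at every knot then forces these constants to agree, so $\varphi' \equiv w$ on $\mathbb{R}$ for a single $w$. Integrating, $\varphi(x) = wx + b$ with $b = \varphi(0)$.

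The one delicate point is the case $k = 1$. There $\varphi$ is only continuous, and $\varphi''$ makes sense only piecewise, so "vanishes identically" must be read as "vanishes distributionally". For a piecewise-linear continuous function, the distributional second derivative is $\sum_i (\text{jump of } \varphi' \text{ at } t_i)\,\delta_{t_i}$. This vanishes exactly when there are no slope jumps, and the same conclusion follows. I expect stating this interpretation cleanly to be the main care needed; the rest is routine.

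\emph{Part two (dropping the intercept).} Suppose every edge function is affine, $\varphi_{ji}(x) = w_{ji}x + b_{ji}$. A KAN layer then computes $\Phi(x) = Ax + c$ with $A = (w_{ji})$ and $c_j = \sum_i b_{ji}$. A composition of such layers is again affine, so the encoder is $\mathcal{E}(x) = Mx + d$. The reconstruction becomes $\mathcal{D}(\mathcal{E}(x_t)) = WMx_t + Wd$.

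Expanding the loss and using $\frac1T\sum_t x_t = 0$, the cross term vanishes:
\[
\mathcal{L} = \frac1T\sum_{t=1}^T \|x_t - WMx_t\|^2 + \|Wd\|^2 .
\]
The second term is nonnegative and is zero when $d = 0$. The first term, and the factor subspace $\operatorname{range}(M^\top)$ spanned by the encoder, do not depend on $d$. Therefore any minimizer can be replaced by one with all $b_{ji} = 0$ without increasing the loss or changing the recovered subspace. This justifies taking $\varphi(x) = wx$.
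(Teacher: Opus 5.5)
Your proof is correct and follows the paper's two-step outline: a vanishing second derivative forces affineness, and centering makes the intercept harmless. However, you justify both steps differently and more carefully.

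The paper's first step is the one-liner that a function with $\varphi''\equiv 0$ ``does not curve'' and so is affine. That is unambiguous only when $\varphi''$ exists everywhere, as for the cubic splines used in the experiments. You make explicit what the hypothesis must mean at the knots. For $k=2$, vanishing on each open knot interval together with continuity of $\varphi'$ suffices. For $k=1$, the piecewise reading is satisfied by any continuous piecewise-linear spline with kinks at the knots, so the distributional reading is genuinely required; the paper misses this. (A small nit: the definition of $\mathcal{S}_k(G)$ imposes polynomial pieces only on $[t_0,t_m]$, so you should not assert that $\varphi$ is a polynomial on the unbounded end pieces. You do not need it, since $\varphi''=0$ on an open interval already makes $\varphi'$ constant there.)

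For the second step the paper argues by invariance: the intercepts produce a constant offset that is ``absorbed'' without changing the factor covariance. Since the decoder $\mathcal{D}(z)=Wz$ has no bias, the final offset is not actually absorbed. Your identity $\mathcal{L}=\frac1T\sum_t\|x_t-WMx_t\|^2+\|Wd\|^2$ shows it enters as a separate nonnegative penalty. Because $M$ depends only on the slopes, zeroing every $b_{ji}$ keeps $M$ and removes the penalty. This is the rigorous form of the paper's claim, and it gives slightly more: every minimizer has $Wd=0$, and its linear part minimizes the bias-free objective.

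Conversely, the paper's proof includes one observation you omit: every affine function lies in $\mathcal{S}_k(G)$. It is not part of the lemma as stated, but the reduction in Theorem~\ref{thm:main} relies on it to range over all matrices $A$, so it deserves a line.
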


\begin{proof}
A degree-1 polynomial is automatically an element of any spline space $\mathcal{S}_k(G)$ with $k \geq 1$. If the second derivative is zero everywhere, the function does not curve at any point, so it is affine: $\varphi(x) = wx + b$. When the inputs $x_t$ are centered to zero mean, a constant shift $b$ applied on each edge contributes only a fixed offset to the layer output and is absorbed without changing the covariance structure or the recovered subspace; we therefore take $b = 0$.
\end{proof}

\begin{lemma}[Linear KAN Layer is Matrix Multiplication]
\label{lem:kan-linear}
If all spline functions $\varphi_{ji} \in \mathcal{S}_k(G)$ satisfy $\varphi_{ji}'' \equiv 0$ for all $i, j$, then the KAN layer $\Phi : \mathbb{R}^n \to \mathbb{R}^m$ reduces to:
\[
\Phi(x) = Wx, \qquad W_{ji} := w_{ji} \in \mathbb{R}
\]
\end{lemma}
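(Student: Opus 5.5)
The plan is to apply Lemma~\ref{lem:linear} edge by edge and then reassemble the layer's componentwise definition as a matrix--vector product. The argument has three steps.

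First, fix an output index $j$ and an input index $i$. By hypothesis $\varphi_{ji}'' \equiv 0$, so Lemma~\ref{lem:linear} gives $\varphi_{ji}(x) = w_{ji}x + b_{ji}$ for scalars $w_{ji}, b_{ji}$. On zero-mean inputs, the same lemma lets us set $b_{ji}=0$, so $\varphi_{ji}(x) = w_{ji}x$. If the edge carries the full SiLU-plus-spline activation from the KAN Layer definition, I would read the hypothesis as applying to the whole edge function $\varphi_{ji}$. The affine conclusion of Lemma~\ref{lem:linear} then still applies, because it only uses the vanishing second derivative on $\mathbb{R}$. In practice this forces $w_b=0$, since $\mathrm{SiLU}''\not\equiv 0$ and no spline can cancel it on all of $\mathbb{R}$. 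That point is not needed for the argument, though.

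Second, I substitute into the definition of the layer:
\[
(\Phi(x))_j = \sum_{i=1}^n \varphi_{ji}(x_i) = \sum_{i=1}^n w_{ji}x_i = (Wx)_j,
\]
where $W\in\mathbb{R}^{m\times n}$ has entries $W_{ji} := w_{ji}$. Since this holds for every $j=1,\dots,m$, we get $\Phi(x)=Wx$.

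The only delicate point is the handling of the intercepts. Summing the $b_{ji}$ over $i$ gives a bias vector $\beta_j=\sum_i b_{ji}$, so without Lemma~\ref{lem:linear}'s normalization the layer would be affine, $\Phi(x) = Wx+\beta$, rather than linear. I would say this explicitly: $\beta$ is a constant shift of the layer output, and under the zero-mean convention it is discarded exactly as in Lemma~\ref{lem:linear}. The lemma should therefore be understood as holding up to this offset, which does not affect the covariance or the factor subspace. Everything else is direct substitution.
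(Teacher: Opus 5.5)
Your proof is correct and follows the paper's own argument. You apply Lemma~\ref{lem:linear} on each edge to get $\varphi_{ji}(x_i) = w_{ji}x_i$, then substitute into $(\Phi(x))_j = \sum_i \varphi_{ji}(x_i)$ to read off $(Wx)_j$. Your remark that without the normalization $b_{ji}=0$ the layer is only affine, $\Phi(x) = Wx + \beta$ with $\beta_j = \sum_i b_{ji}$, is a fair clarification of a point the paper's proof handles implicitly through that same normalization in Lemma~\ref{lem:linear}.
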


\begin{proof}
By Lemma~\ref{lem:linear}, $\varphi_{ji}(x_i) = w_{ji} x_i$ for each edge $(j,i)$. Therefore:
\[
\left(\Phi(x)\right)_j = \sum_{i=1}^{n} \varphi_{ji}(x_i) = \sum_{i=1}^{n} w_{ji} x_i = (Wx)_j
\]
where $W = (w_{ji}) \in \mathbb{R}^{m \times n}$. The full layer computes $x \mapsto Wx$.

\medskip
\noindent\textit{Concrete example (3 stocks, 3 nodes):}
\[
\begin{pmatrix} \mathrm{node}_1 \\ \mathrm{node}_2 \\ \mathrm{node}_3 \end{pmatrix}
=
\begin{pmatrix} w_{11} & w_{12} & w_{13} \\ w_{21} & w_{22} & w_{23} \\ w_{31} & w_{32} & w_{33} \end{pmatrix}
\begin{pmatrix} x_{\mathrm{AAPL}} \\ x_{\mathrm{MSFT}} \\ x_{\mathrm{GOOG}} \end{pmatrix}
= Wx
\]
\end{proof}

\begin{lemma}[Optimal Linear Autoencoder $=$ PCA~\citep{baldi1989}]
\label{lem:baldi}
Let $\Sigma = U\Lambda U^\top$ be the spectral decomposition of the empirical covariance matrix. The linear autoencoder:
\[
\min_{A \in \mathbb{R}^{k \times N},\; W \in \mathbb{R}^{N \times k}} \frac{1}{T}\sum_{t=1}^T \|x_t - WAx_t\|^2
\]
achieves its global minimum at $A = U_k^\top$, $W = U_k$, where $U_k$ are the top-$k$ eigenvectors of $\Sigma$. The minimum value is $\sum_{i=k+1}^{N} \lambda_i$.
\end{lemma}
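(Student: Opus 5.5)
The plan is to reduce the two-matrix problem to a low-rank approximation problem and then apply the Eckart--Young theorem. Stack the data as $X = [x_1 \mid \cdots \mid x_T] \in \mathbb{R}^{N\times T}$. The objective becomes
\[
\frac{1}{T}\|X - WAX\|_F^2 .
\]
The product $P := WA$ has rank at most $k$. I will first minimize over all $P$ of rank at most $k$, and then show that the proposed pair $(A,W)$ attains this relaxed minimum. Because every feasible $(A,W)$ gives such a $P$, this proves global optimality.

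For the relaxed problem I would use trace identities. Expanding gives
\[
\frac{1}{T}\|X - PX\|_F^2 = \mathrm{tr}\big((I-P)\Sigma(I-P)^\top\big),
\]
using $\Sigma = \frac{1}{T}XX^\top$. Since $PX$ has rank at most $k$ and $X$ is fixed, the Eckart--Young theorem gives
\[
\|X - PX\|_F^2 \ge \sum_{i>k} \sigma_i(X)^2 = T\sum_{i>k}\lambda_i .
\]
Here the singular values satisfy $\sigma_i(X)^2 = T\lambda_i$. This gives the lower bound $\sum_{i=k+1}^N \lambda_i$ for every $(A,W)$.

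Next I check that $A = U_k^\top$, $W = U_k$ attains the bound. In that case $P = U_kU_k^\top$ is the orthogonal projector onto the top-$k$ eigenspace. Then
\[
(I-P)\Sigma(I-P) = \sum_{i>k}\lambda_i u_iu_i^\top,
\]
whose trace is $\sum_{i>k}\lambda_i$. Together with the lower bound, this proves that the minimum value is $\sum_{i=k+1}^N\lambda_i$ and that it is achieved at the stated point.

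I expect the main subtlety to be the precise meaning of "achieves its global minimum at". The minimizer is not unique. Any pair $(U_kC^{-1}, CU_k^\top)$ with $C$ invertible gives the same $P$. If $\lambda_k=\lambda_{k+1}$, the top-$k$ subspace itself is not unique either. I would therefore state the claim as: $(U_k^\top, U_k)$ is a global minimizer, and every minimizer has $WA$ equal to a projector onto a top-$k$ eigenspace (the span of $U_k$ when $\lambda_k>\lambda_{k+1}$). The uniqueness part would follow from the equality case of Eckart--Young, which could be cited from \citep{baldi1989} rather than reproved. A self-contained alternative to Eckart--Young is to fix $A$ and solve the least-squares problem in $W$. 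This reduces the objective to $\mathrm{tr}\,\Sigma - \mathrm{tr}(Q\Sigma)$ with $Q$ an orthogonal projector of rank at most $k$, and Ky Fan's maximum principle then bounds $\mathrm{tr}(Q\Sigma) \le \sum_{i\le k}\lambda_i$. The one technical detail in this route is handling singular $A\Sigma A^\top$ via a pseudoinverse.
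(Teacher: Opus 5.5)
Your argument is correct, but it takes a different route from the paper. The paper fixes $A$, sets $\partial\mathcal{L}/\partial W=0$ to get $W=\Sigma A^\top(A\Sigma A^\top)^{-1}$, substitutes back to get an objective in $A$ alone, and maximizes the remaining trace term with Ky Fan's principle. You instead relax to $P=WA$ of rank at most $k$. You get the lower bound $\sum_{i>k}\lambda_i$ for every pair $(A,W)$ at once from Eckart--Young applied to $X$ (via $\sigma_i(X)^2=T\lambda_i$), and you check attainment at $P=U_kU_k^\top$ by a one-line trace computation.

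Your route needs no calculus and no invertibility of $A\Sigma A^\top$. It also keeps the lower bound separate from its attainment, which matters here. Substituting the paper's $W$ actually yields $\mathrm{tr}\,\Sigma-\mathrm{tr}\bigl((A\Sigma A^\top)^{-1}A\Sigma^2A^\top\bigr)=\mathrm{tr}\,\Sigma-\mathrm{tr}(Q\Sigma)$, with $Q$ the orthogonal projector onto the range of $\Sigma^{1/2}A^\top$. That is exactly the form in your fallback. The paper's displayed $\mathrm{tr}\,\Sigma-\mathrm{tr}\bigl(A\Sigma A^\top(AA^\top)^{-1}\bigr)$ is instead the loss at the particular decoder $W=A^\top(AA^\top)^{-1}$, so it is only an upper bound on $\min_W\mathcal{L}(A,W)$. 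Both subtracted trace terms have the same maximum $\sum_{i\le k}\lambda_i$ over $A$, so the paper reaches the right value. But optimality is certified only by a genuine lower bound such as yours.

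In principle, the paper's concentrate-out-$W$ approach buys an explicit optimal decoder for every encoder, which is the starting point of Baldi and Hornik's full critical-point analysis. Your route yields only the global minimum and, through the equality case, the minimizers, which is all the lemma requires. One caveat on your uniqueness aside: minimality fixes $WA$ only on the range of $X$, so for singular $\Sigma$ a minimizing $WA$ need not be the orthogonal projector $U_kU_k^\top$.
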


\begin{proof}
Expanding the loss using $\Sigma = \frac{1}{T}\sum_t x_t x_t^\top$:
\[
\mathcal{L}(A, W) = \mathrm{tr}\!\left(\Sigma - \Sigma A^\top W^\top - WA\Sigma + WA\Sigma A^\top W^\top\right)
\]
Setting $\frac{\partial \mathcal{L}}{\partial W} = 0$ gives $W = \Sigma A^\top (A\Sigma A^\top)^{-1}$. Substituting back:
\[
\mathcal{L}(A) = \mathrm{tr}(\Sigma) - \mathrm{tr}\!\left(A\Sigma A^\top(AA^\top)^{-1}\right)
\]
By the Ky Fan maximum principle \citep{fan1949}, the maximum of $\mathrm{tr}(A\Sigma A^\top(AA^\top)^{-1})$ is attained when the rows of $A$ span the top-$k$ eigenspace of $\Sigma$, giving $A = U_k^\top$ and $W = U_k$. The minimum loss is $\mathcal{L}^* = \sum_{i=k+1}^N \lambda_i$.
\end{proof}

\begin{theorem}[KAN-PCA Reduces to PCA in the Linear Limit]
\label{thm:main}
Let KAN-PCA be a KAN autoencoder $\mathcal{E}$ with linear decoder $\mathcal{D}$, trained to minimize the reconstruction loss. If all spline activations $\varphi_{ji}$ are constrained to satisfy $\varphi_{ji}'' \equiv 0$ everywhere, then the optimal encoder and decoder satisfy:
\[
\mathcal{E}^*(x) = U_k^\top x \qquad \mathcal{D}^*(z) = U_k z
\]
where $U_k^\top x$ is a standard matrix multiplication, the linear limit of the KAN encoder by Lemmas~\ref{lem:linear} and~\ref{lem:kan-linear}. Consequently:
\[
\mathcal{L}^*_{\mathrm{KAN}} \;\leq\; \mathcal{L}^*_{\mathrm{PCA}} = \sum_{i=k+1}^{N} \lambda_i
\]
with equality if and only if the data admits a purely linear factor structure.
\end{theorem}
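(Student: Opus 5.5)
The plan is to reduce the constrained KAN-PCA problem to the linear autoencoder problem of Lemma~\ref{lem:baldi}, and then handle the inequality by a nesting argument.

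\textbf{Step 1: the constrained encoder is linear.} Under the constraint $\varphi_{ji}''\equiv 0$, Lemma~\ref{lem:linear} makes every edge function affine, $\varphi_{ji}(x)=w_{ji}x+b_{ji}$. In the SiLU-plus-spline parametrization this forces $w_b=0$ on every edge, since $\mathrm{SiLU}''\not\equiv 0$ and it cannot be cancelled by a polynomial on all of $\mathbb{R}$. Rather than rely on the informal intercept remark in Lemma~\ref{lem:linear}, I would handle the intercepts directly.
\begin{itemize}
  \item Each layer computes $x\mapsto W^{(\ell)}x+c^{(\ell)}$.
  \item By Lemma~\ref{lem:kan-linear} together with composition, the whole encoder is $\mathcal{E}(x)=Ax+c$, where $A=W^{(L)}\cdots W^{(1)}\in\mathbb{R}^{k\times N}$ has rank at most $k$.
  \item The loss is then $\frac1T\sum_t\|x_t-WAx_t-Wc\|^2$. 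Since $\sum_t x_t=0$, the cross term vanishes and the loss equals $\frac1T\sum_t\|x_t-WAx_t\|^2+\|Wc\|^2$.
  \item Hence any optimum has $Wc=0$, and we may take $c=0$.
\end{itemize}
Conversely, every $A\in\mathbb{R}^{k\times N}$ is realizable. Set the first layer to $A$ (padded if wider) and each subsequent layer to identity blocks, assuming hidden widths are at least $k$; I would state this as a standing assumption. So the feasible set of constrained encoders is exactly $\{x\mapsto Ax\}$.

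\textbf{Step 2: apply Lemma~\ref{lem:baldi}.} The constrained problem is now literally the linear autoencoder problem. Its minimum is $\sum_{i>k}\lambda_i$, attained at $A=U_k^\top$, $W=U_k$. I would remark that minimizers are unique only up to $A\mapsto RA$, $W\mapsto WR^{-1}$ with $R\in GL_k$, and up to the choice of basis within a repeated eigenvalue $\lambda_k=\lambda_{k+1}$. The statement's $\mathcal{E}^*,\mathcal{D}^*$ should therefore be read as ``an optimal pair is,'' or as ``the product $\mathcal{D}^*\mathcal{E}^*=U_kU_k^\top$ is the PCA projector'' when $\lambda_k>\lambda_{k+1}$.

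\textbf{Step 3: the inequality.} Here $\mathcal{L}^*_{\mathrm{KAN}}$ must be read as the optimum over the \emph{unconstrained} KAN encoder class. That class contains the affine-constrained class (take $w_b=0$ and spline coefficients representing a line), so $\mathcal{L}^*_{\mathrm{KAN}}\le\mathcal{L}^*_{\mathrm{PCA}}$ by inclusion of feasible sets.

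\textbf{Main obstacle: the equality clause.} The ``if and only if the data admits a purely linear factor structure'' clause is not a precise mathematical statement as written. I would make it precise as follows: equality holds iff no KAN encoder $\mathcal{E}$ achieves $\frac1T\sum_t\|x_t-W\mathcal{E}(x_t)\|^2<\sum_{i>k}\lambda_i$. For a fixed decoder $W$, the optimal code is $z_t=W^+x_t$, which is linear in $x_t$. So equality holds iff the best achievable codes over the KAN class do not improve on the linear codes $U_k^\top x_t$.

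One caution is that on a finite sample with a sufficiently rich spline grid, a KAN can often interpolate arbitrary codes, and then equality would fail generically. The honest version of the claim is therefore a characterization relative to the fixed function class, not a structural property of the data alone. I would prove that weaker characterization and note that the stronger reading is not supported.
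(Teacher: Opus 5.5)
Your Steps 1--3 follow the paper's route and are correct. The paper collapses the constrained encoder to $x\mapsto Ax$ by Lemmas~\ref{lem:linear} and~\ref{lem:kan-linear} and then applies Lemma~\ref{lem:baldi}; you do the same, more carefully than its two-line proof. The paper omits four points you supply: the cross-term computation forcing $Wc=0$ at an optimum, the vanishing of the SiLU weight, the hidden-width $\ge k$ assumption needed for $U_k^\top$ to be feasible at all, and the $GL_k$ non-uniqueness. It also never argues the ``consequently'' inequality, which your inclusion argument supplies.

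The genuine error is in your discussion of the equality clause. You claim that a KAN rich enough to interpolate arbitrary codes would make equality ``fail generically.'' This is false, and your own remark that the best code for fixed $W$ is $W^+x_t$ already refutes it. For any encoder whatsoever, $W\mathcal{E}(x_t)$ lies in the column space of $W$. With $P=WW^+$ the orthogonal projector onto that space,
\[
\frac1T\sum_{t=1}^T\|x_t-W\mathcal{E}(x_t)\|^2\;\ge\;\frac1T\sum_{t=1}^T\|(I-P)x_t\|^2=\mathrm{tr}\big((I-P)\Sigma\big)\;\ge\;\sum_{i=k+1}^N\lambda_i,
\]
where the last step is Ky Fan, since $\mathrm{rank}\,P\le k$. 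Interpolating arbitrary codes buys nothing, because the pointwise optimal code is already linear.

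So under either reading of $\mathcal{L}^*_{\mathrm{KAN}}$, one has $\mathcal{L}^*_{\mathrm{KAN}}=\mathcal{L}^*_{\mathrm{PCA}}$ for every data set: with a linear decoder the inequality is always an equality. The ``only if'' half of the clause is therefore false for any nontrivial meaning of ``purely linear factor structure''; take data on a parabola in $\mathbb{R}^2$ with $k=1$. The right repair is not a tautological characterization relative to the function class, but replacing the clause by unconditional equality. This also shows that the paper's reported in-sample gain ($65.56\%$ vs.\ $64.71\%$) cannot come from a linear-decoder model fitted on the same data and scored by pooled explained variance.
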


\begin{proof}
By Lemmas~\ref{lem:linear} and~\ref{lem:kan-linear}, the KAN encoder reduces to $\mathcal{E}_{\mathrm{lin}}(x) = Ax$. The loss becomes the linear autoencoder objective of Lemma~\ref{lem:baldi}, giving $A^* = U_k^\top$ and $W^* = U_k$.
\end{proof}

\begin{corollary}
Classical PCA is a special case of KAN-PCA:
\[
\mathrm{KAN\text{-}PCA}\big|_{\varphi_{ji}'' \equiv 0} \;=\; \mathrm{PCA}
\]
Writing $\mathcal{L}_{\mathrm{aff}}$ for the affine functions, the inclusion $\mathcal{L}_{\mathrm{aff}} \subsetneq \mathcal{S}_k(G)$ for $k \geq 2$ is strict, so KAN-PCA optimizes over a strictly larger function class than PCA.
\end{corollary}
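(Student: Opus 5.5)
The corollary has two parts: the identity $\mathrm{KAN\text{-}PCA}|_{\varphi_{ji}''\equiv 0}=\mathrm{PCA}$, and the strict inclusion $\mathcal{L}_{\mathrm{aff}}\subsetneq\mathcal{S}_k(G)$ for $k\ge 2$. I will prove them separately.

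\textbf{First part.} I read it as a restatement of Theorem~\ref{thm:main}, with one point made explicit: a KAN encoder that is a composition of several layers still collapses to a single matrix. By Lemma~\ref{lem:kan-linear}, each constrained layer $\ell$ acts as $x\mapsto W_\ell x$. Hence the encoder is $x\mapsto W_L\cdots W_1 x = Ax$ with $A\in\mathbb{R}^{k\times N}$.

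Conversely, every $A\in\mathbb{R}^{k\times N}$ must be realizable by the constrained encoder. For a one-layer encoder, take $\varphi_{ji}(x)=A_{ji}x$. For deeper encoders, factor $A$ through the intermediate widths, which is possible when those widths are at least $k$; I will state this as a standing assumption on the architecture. Then the set of constrained KAN-PCA models coincides with the set of linear autoencoders $(A,W)$. Lemma~\ref{lem:baldi} identifies their minimizer with $(U_k^\top,U_k)$, and the resulting factors $z_t=U_k^\top x_t$ and reconstructions $U_kU_k^\top x_t$ are exactly those of PCA.

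There is one technical subtlety. An affine function lies in $\mathcal{S}_k(G)$, but the parametrization in the KAN layer definition includes the term $w_b\,\mathrm{SiLU}(x)$, which is not affine. The constraint therefore forces $w_b=0$, or more precisely forces the whole edge function to have zero second derivative. Moreover, B-splines on a finite knot vector only represent $wx$ on $[t_0,t_m]$. I will handle this by assuming the grid covers the data range and noting that partition of unity lets $\sum_l c_lB_{l,k}$ reproduce linear functions there (Marsden's identity). This is the step I expect to need the most care.

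\textbf{Second part.} Inclusion is immediate: a degree-1 polynomial restricted to each knot interval is in $\mathcal{P}_k$ and is $C^\infty$, so it lies in $\mathcal{S}_k(G)$. For strictness I will exhibit a witness. The simplest choice is $\varphi(x)=x^2$, which is in $\mathcal{P}_2\subseteq\mathcal{P}_k$ for $k\ge2$, is smooth, and has $\varphi''=2\neq0$, so it is not affine.

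\textbf{Conclusion.} It follows that the KAN-PCA feasible set strictly contains the PCA feasible set, so its optimal loss is at most $\sum_{i>k}\lambda_i$. I will not claim the loss is strictly smaller, since the corollary asserts only the strict containment of function classes.
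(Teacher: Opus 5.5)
Your proposal is correct and follows the paper's own route. The first part is Theorem~\ref{thm:main}, which reduces the constrained model via Lemmas~\ref{lem:linear} and~\ref{lem:kan-linear} to the linear autoencoder of Lemma~\ref{lem:baldi}, and strictness comes from any non-affine spline of degree $\ge 2$, for which $x^2$ is a valid witness. Your explicit converse (every $A$ is realizable when intermediate widths are at least $k$, as in $[20,10,3]$), the collapse of composed layers, and the SiLU and grid-range remarks fill in steps the paper leaves implicit. The one claim to soften is ``the factors are exactly those of PCA'': Lemma~\ref{lem:baldi}'s minimizer is only unique up to $(A,W)\mapsto(QA,WQ^{-1})$ with $Q$ invertible, so what coincides with PCA is the reconstruction $U_kU_k^\top x_t$ and the factor subspace (when $\lambda_k>\lambda_{k+1}$).
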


\section{Experiments}

\subsection{Data and Setup}

We use daily log-returns for 20 S\&P~500 large-cap constituents from January 2015 to January 2024, comprising 2,263 trading days. The tickers are AAPL, MSFT, GOOGL, JPM, JNJ, XOM, BAC, PG, KO, WMT, CVX, GS, HD, MCD, CAT, MMM, IBM, GE, BA, and DIS. Returns are standardized to zero mean and unit variance per stock.

The dataset is split 70/10/20: a training set (1,584 days, 2015--2020), a validation set (226 days, 2021), and a test set (453 days, 2022--2024). All hyperparameter choices are made on the training and validation data only.

\subsection{Models}

We benchmark KAN-PCA against classical PCA, all using $k=3$ factors. Classical PCA is the standard spectral decomposition of $\Sigma$, fitted on the training period only to give a fair, leakage-free benchmark. The KAN-PCA encoder uses the architecture $[20, 10, 3, 20]$: 20 stock inputs, an intermediate KAN layer of 10 neurons, compression to 3 factors, and a linear decoder back to 20 outputs.

We arrived at the final configuration through several iterations. Initial variants (a simpler $[20,3,20]$ encoder and a deeper $[20,10,3,20]$ encoder with grid $G=10$) were trained without a proper validation split, which, as we explain in the ablation study, led to data leakage and optimistic out-of-sample numbers. The final model, KAN-PCA Improved, corrects this with a proper 70/10/20 split, early stopping on the validation loss, progressive grid extension ($3 \to 5 \to 10$), and mild regularization (spline penalty $\lambda$ decreasing across stages from $10^{-3}$ to $10^{-4}$, entropy penalty $0.1$). All numbers reported below are from this corrected procedure.

\subsection{Results}

\begin{table}[H]
\centering
\caption{Classical PCA versus KAN-PCA on 20 S\&P~500 stocks ($k=3$ factors, 70/10/20 split). Both models are fitted on the training period only.}
\begin{tabular}{lcc}
\toprule
Model & $R^2$ Train & $R^2$ Out-of-Sample \\
\midrule
Classical PCA (train-fit) & 64.71\% & \textbf{57.32\%} \\
KAN-PCA Improved & \textbf{65.56\%} & 56.73\% \\
\bottomrule
\end{tabular}
\end{table}

\noindent\textit{$R^2$ Train $=$ in-sample explained variance on the training period. $R^2$ Out-of-Sample $=$ explained variance on the held-out test period.}

\medskip
On the training period, KAN-PCA captures slightly more variance than classical PCA ($65.56\%$ vs $64.71\%$), which indicates that some nonlinear structure exists in the returns that a purely linear projection cannot represent. Out-of-sample the two methods are close, with classical PCA marginally ahead by $0.59$ percentage points ($57.32\%$ vs $56.73\%$). In other words, the additional flexibility of the spline activations helps in-sample but does not translate into an out-of-sample advantage on this dataset and universe size. This is consistent with the theoretical result of Section~3: when the underlying factor structure is close to linear, the larger function class of KAN-PCA has little genuine nonlinearity to exploit, and it reduces to behaviour very near that of PCA.

\subsection{Ablation Study}

The first versions of the model were trained and evaluated without a proper separation between training and test data. This created a data leakage issue: the pykan training routine uses the evaluation set for internal grid updates, so the model indirectly observed test-set statistics while learning. As a result, the early out-of-sample numbers were optimistic and not reproducible under a clean protocol.

The corrected model uses a proper 70/10/20 train/validation/test split, early stopping on the validation loss, progressive grid extension ($3 \to 5 \to 10$), and mild regularization. Under this protocol the model trains stably and the comparison against PCA is fair: KAN-PCA is $0.59$pp below classical PCA out-of-sample ($56.73\%$ vs $57.32\%$) while being slightly ahead in-sample. The lesson of the ablation is methodological. Most of the apparent out-of-sample strength in the first experiments was an artifact of leakage rather than a real property of the model, and a careful protocol is essential when evaluating flexible nonlinear models on financial time series.

\subsection{Factor Analysis}

\begin{figure}[H]
\centering
\includegraphics[width=\textwidth]{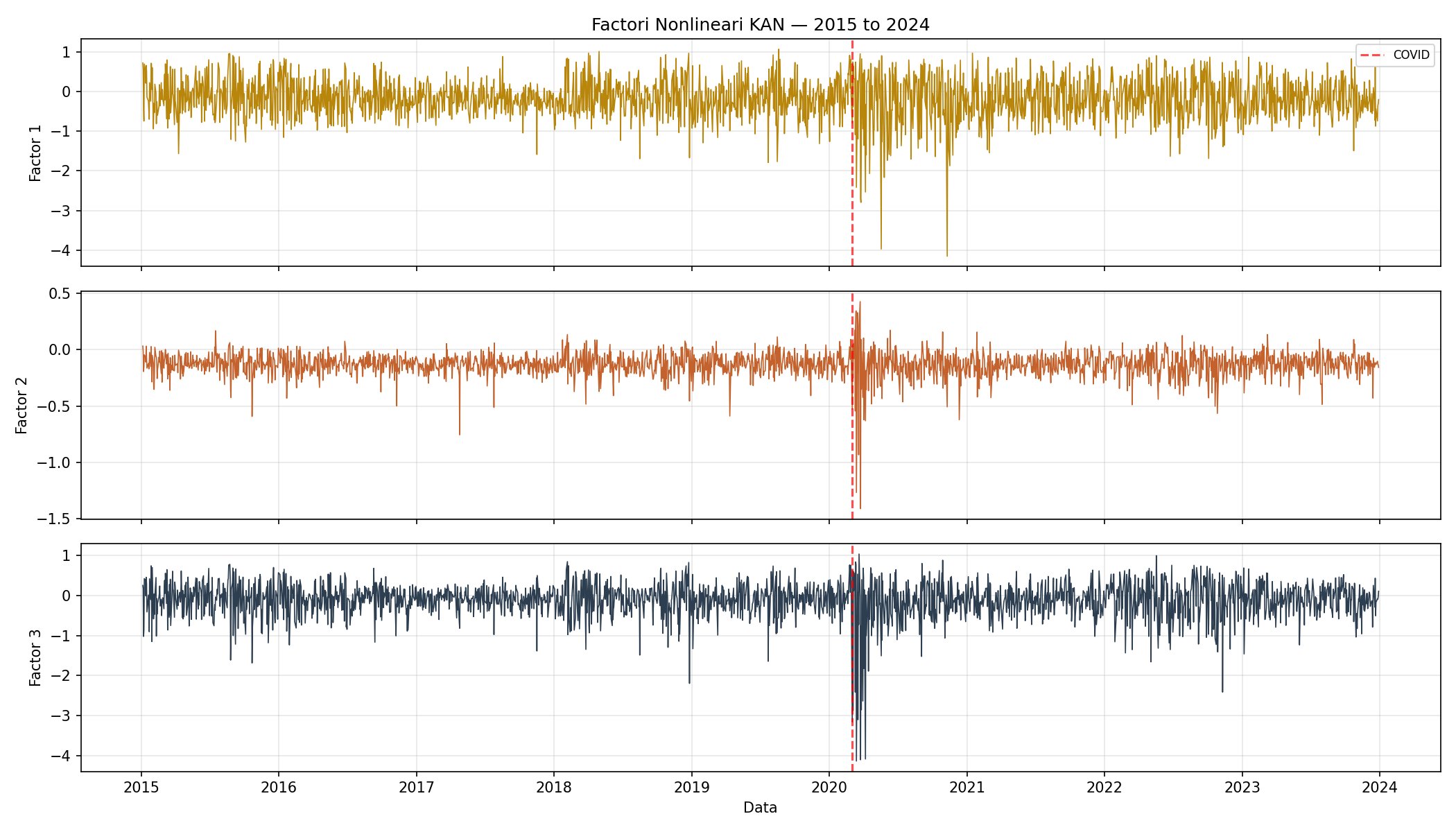}
\caption{KAN nonlinear factors (2015--2024). The red dashed line marks the COVID-19 crash of March 2020. All three factors show sharp movements at this date, in contrast to the linear response of classical PCA.}
\end{figure}

All three KAN factors react sharply at the COVID-19 crash. On the standardized scale, each factor drops well below its normal range during March 2020: Factor~1 and Factor~3 fall toward $-4$, and Factor~2 toward $-1.5$, in each case several times their typical daily amplitude. The qualitative point is that the model registers the crash as a large, abrupt deviation in all three coordinates simultaneously, whereas classical PCA spreads the same shock linearly across its components. The magnitudes themselves should be read as relative excursions rather than calibrated quantities, since the factors are only identified up to scale.

\subsection{Edge Function Analysis}

\begin{figure}[H]
\centering
\includegraphics[width=\textwidth]{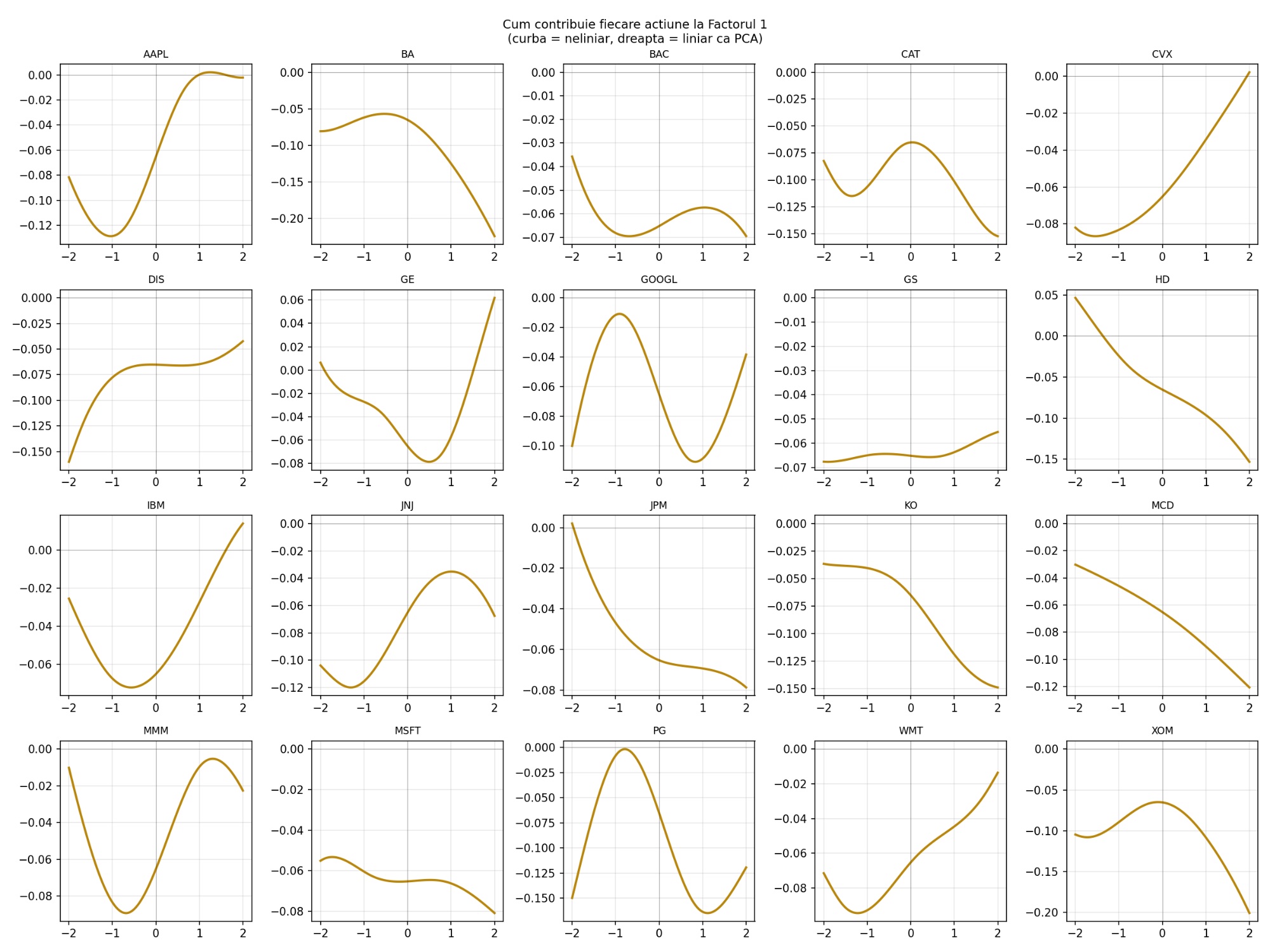}
\caption{Learned edge functions $\varphi(x)$ for all 20 stocks (contribution to Factor 1). If the relationships were linear, as PCA assumes, all curves would be straight lines. They are not, which gives direct visual evidence of nonlinear factor structure.}
\end{figure}

The learned edge functions $\varphi_{ji}(x)$ are visibly nonlinear across the 20 stocks; a strictly linear PCA loading would appear as a straight line in every panel. Several patterns stand out. DIS shows a sharp localized peak near zero, so small returns contribute very differently from a linear extrapolation. HD is nearly flat for negative returns and then rises steeply for positive ones, an asymmetric response. PG is bimodal, with two distinct local minima, consistent with more than one operating regime. Other names such as KO, MCD, and XOM are closer to monotone but still curved, indicating mild rather than strong nonlinearity. Taken together, the panels show that the nonlinearity captured by the model is real but uneven across assets.

\section{Conclusion}

In this paper we studied KAN-PCA, a factor model that replaces the linear projections of classical PCA with learned spline functions on each edge while keeping a linear, interpretable decoder. Our main result is theoretical: when the spline activations are constrained to be affine, KAN-PCA coincides exactly with classical PCA, so PCA is a strict special case within a larger function class. Empirically, on 20 S\&P~500 stocks the model captures slightly more variance than PCA in-sample ($65.56\%$ vs $64.71\%$), while out-of-sample the two are close, with PCA marginally ahead ($57.32\%$ vs $56.73\%$). The learned edge functions are visibly nonlinear, which gives direct evidence that some nonlinear structure is present even though it does not translate into an out-of-sample gain at this universe size.

We also reported a methodological lesson. Our first experiments suffered from data leakage through the grid-update routine, which produced optimistic out-of-sample numbers; a proper validation split removes this effect and yields the fair comparison above.

A natural next step is to apply the same construction to a larger universe of assets, on the order of 200 stocks, with a deeper architecture and a longer training procedure. Our preliminary attempts at that scale did not train stably, and we leave a careful treatment, including the sparsification and grid-extension techniques of KAN~2.0~\citep{liu2024v2}, to future work. Whether the larger function class of KAN-PCA yields a genuine out-of-sample advantage on returns data, rather than only an in-sample one, remains an open empirical question.

\bibliographystyle{unsrtnat}

\end{document}